\documentclass[11pt]{article}
\usepackage[a4paper,margin=1in]{geometry}
\usepackage[T1]{fontenc}
\usepackage[utf8]{inputenc}
\usepackage{lmodern}
\usepackage{microtype}
\usepackage{amsmath,amssymb,amsthm}
\usepackage{booktabs}
\usepackage{enumitem}
\usepackage{xcolor}
\usepackage{hyperref}
\usepackage[nameinlink,noabbrev]{cleveref}
\usepackage{array}
\usepackage{multirow}
\usepackage{mathtools}
\usepackage{graphicx}
\usepackage{aliascnt}

\hypersetup{colorlinks=true,linkcolor=blue!50!black,citecolor=blue!50!black,urlcolor=blue!50!black}

\newaliascnt{proposition}{theorem}
\newtheorem{proposition}[proposition]{Proposition}
\aliascntresetthe{proposition}
\newaliascnt{lemma}{theorem}
\newtheorem{lemma}[lemma]{Lemma}
\aliascntresetthe{lemma}
\newaliascnt{corollary}{theorem}
\newtheorem{corollary}[corollary]{Corollary}
\aliascntresetthe{corollary}
\newaliascnt{definition}{theorem}
\newtheorem{definition}[definition]{Definition}
\aliascntresetthe{definition}
\newaliascnt{remark}{theorem}
\newtheorem{remark}[remark]{Remark}
\aliascntresetthe{remark}

\crefname{theorem}{theorem}{theorems}
\Crefname{theorem}{Theorem}{Theorems}
\crefname{proposition}{proposition}{propositions}
\Crefname{proposition}{Proposition}{Propositions}
\crefname{lemma}{lemma}{lemmas}
\Crefname{lemma}{Lemma}{Lemmas}
\crefname{corollary}{corollary}{corollaries}
\Crefname{corollary}{Corollary}{Corollaries}
\crefname{definition}{definition}{definitions}
\Crefname{definition}{Definition}{Definitions}
\crefname{remark}{remark}{remarks}
\Crefname{remark}{Remark}{Remarks}

\newcommand{\R}{\mathbb{R}}
\newcommand{\hd}{\operatorname{depth}}
\newcommand{\conv}{\operatorname{conv}}
\newcommand{\dist}{\operatorname{dist}}

\title{Local Depth-Based Corrections to Maxmin Landmark Selection for Lazy Witness Persistence}
\author{Yifan Zhang\\[0.5ex]
\small Department of Algebra, Charles University\\
\small Department of Applied Mathematics, VSB -- Technical University of Ostrava\\
\small Department of Mathematics, University of Ostrava\\
\small \texttt{yifan.zhang@mff.cuni.cz}}
\date{}

\begin{document}
\maketitle

\begin{abstract}
We study a family of local depth-based corrections to maxmin landmark selection for lazy witness persistence. Starting from maxmin seeds, we partition the cloud into nearest-seed cells and replace or move each seed toward a deep representative of its cell. The principal implemented variant, \emph{support-weighted partial recentering}, scales the amount of movement by cell support.

The contributions are both mathematical and algorithmic. On the mathematical side, we prove local geometric guarantees for these corrections: a convex-core robustness lemma derived from halfspace depth, a $2r$ cover bound for subset recentering, and projected cover bounds for the implemented partial-recentering rules. On the algorithmic side, we identify a practically effective variant through a layered empirical study consisting of planar synthetic benchmarks, a parameter-sensitivity study, and an MPEG-7 silhouette benchmark, together with a modest three-dimensional torus extension. The main planar experiments show that support-weighted partial recentering gives a consistent geometric improvement over maxmin while preserving the thresholded $H_1$ summary used in the study. The three-dimensional experiment shows the same geometric tendency but only mixed topological behavior. The paper should therefore be read as a controlled study of a local depth-based alternative to maxmin, rather than as a global witness-approximation theorem or a claim of uniform empirical superiority.
\end{abstract}

\noindent\textbf{Keywords.} persistent homology; lazy witness complex; landmark selection; halfspace depth; robust subsampling; computational topology\\
\textbf{MSC (2020).} Primary 55N31; Secondary 52A35, 68U05.

\section{Introduction}
Witness and lazy witness complexes remain one of the standard ways to make persistent homology feasible when the raw point cloud is too large for a full Vietoris--Rips computation. One chooses a small set of \emph{landmarks} as vertices and uses the full cloud only as \emph{witnesses} for simplices on those landmarks \cite{deSilvaCarlsson2004}. In the broader computational topological data analysis (TDA) literature, witness-style constructions therefore belong to a larger family of approximation strategies used to control the cost of persistent homology \cite{OtterEtAl2017}. The resulting filtration depends primarily on the landmark set rather than on the full number of data points, which is why witness constructions remain attractive in practice.

The difficulty is equally classical: the witness filtration is only as good as the landmarks. De Silva and Carlsson already emphasized that practical landmark rules must negotiate a tradeoff between \emph{coverage} and \emph{spacing}. They recommended random landmarks and farthest-point landmarks, commonly called \emph{maxmin}, as baseline choices \cite{deSilvaCarlsson2004}. Maxmin is still appealing because it is computationally inexpensive and deterministic once the first seed is fixed, and it distributes landmarks broadly across the cloud. But that same extremal behavior can make it sensitive to outliers and clutter.

That sensitivity is not merely anecdotal. In recent work on robust subsampling for persistent homology, Stolz explicitly treats random and maxmin as standard landmark-selection baselines and introduces new methods precisely because maxmin is unstable under outliers and sparse sampling \cite{Stolz2023}. Arafat, Basu, and Bressan study $\varepsilon$-net induced lazy witness complexes and show that the topological quality of a witness approximation is tightly linked to how the landmarks are chosen \cite{ArafatBasuBressan2019}. Herick, Joachim, and Vahrenhold make a related point from a different angle: there is still no universal, context-free theory saying that witness-like approximations recover the persistent homology of the full Vietoris--Rips filtration under arbitrary landmark choices \cite{HerickJoachimVahrenhold2024}. Accordingly, landmark selection remains a mathematically and algorithmically meaningful question.

The paper asks whether one can correct maxmin in a way that remains geometrically transparent, computationally modest, and mathematically interpretable. The source of that correction comes from halfspace depth and centerpoint theory in discrete geometry. A centerpoint is a point of large halfspace depth; equivalently, every closed halfspace containing it must contain a nontrivial fraction of the data. The survey of B\'ar\'any and Sober\'on provides a convenient reference for this depth-based viewpoint and its relation to broader robust-geometric phenomena \cite{BaranySoberon2018}. What makes this relevant for landmark selection in witness complexes is the following local principle: if most points in a neighborhood belong to a convex inlier core, then any sufficiently deep representative of that neighborhood must stay inside the core.

This observation suggests a new family of \emph{coverage-oriented local landmark corrections}. Start with maxmin seeds, assign each data point to its nearest seed, and then replace each seed by a deep representative of its own cell. In its idealized form, the representative is a local centerpoint or deepest data point. In practice, however, full recentering can move a seed too aggressively toward the interior of a cell, especially when the landmark budget is small and the topology is supported by boundary geometry. The experiments indicate that an intermediate variant -- \emph{support-weighted partial recentering} -- mitigates this effect: the seed is moved in the same depth-driven direction, but only part of the way, with the step size reduced in weakly supported cells.

The paper addresses a specific algorithmic question. It does \emph{not} propose a new general theory of witness complexes, and it does \emph{not} claim any global persistence theorem derived from depth or centerpoint arguments. Instead, it studies a small, explicit family of landmark-selection rules whose main validation is planar lazy witness persistence under low landmark budgets and moderate contamination. Within that scope, the evidence supports a concrete conclusion: local depth-based recentering improves the geometry of the landmark set in a statistically significant and consistent way, and support-weighted partial recentering preserves the thresholded $H_1$-count output within the resolution of the main planar benchmarks. A short three-dimensional torus study is included only as an auxiliary scope extension. There the same geometric effect persists, but the topological behavior is mixed, so the nonplanar evidence is presented only as a limited extension.

\paragraph{Contributions.}
The paper makes four contributions.
\begin{enumerate}[leftmargin=1.5em]
    \item It introduces a \emph{family} of local depth-based corrections to maxmin landmark selection, rather than a single ad hoc modification.
    \item It proves local geometric guarantees for that family: a convex-core robustness lemma, a $2r$ cover bound for subset recentering, and projected cover bounds for the implemented partial-recentering variants.
    \item It uses the ablation study to isolate the preferred practical variant: among the tested methods, support-weighted partial recentering gives the most favorable balance between geometric improvement and preservation of the thresholded $H_1$ summary.
    \item It validates that conclusion through a layered empirical study consisting of a synthetic confirmation benchmark, a parameter-sensitivity study, an MPEG-7 silhouette benchmark with additional witness-side baselines, and an auxiliary three-dimensional torus extension. The first three components provide the main validation, whereas the three-dimensional experiment broadens scope only modestly.
\end{enumerate}

\section{Related work and the precise gap}
\subsection{Witness and lazy witness complexes}
The witness-complex construction of de Silva and Carlsson is the appropriate starting point for this paper \cite{deSilvaCarlsson2004}. Their main point was algorithmic: by working with landmarks as vertices and the ambient cloud as witnesses, one can build a topological approximation whose complexity depends primarily on the number of landmarks. They also introduced the lazy witness complex as a particularly effective simplification for low-dimensional homology, since the entire filtration is controlled by its one-skeleton. For the present paper, two messages from that work matter most. First, witness-based methods are useful precisely because landmark sets are much smaller than the original cloud. Second, the geometry of the landmark set is a first-order design choice rather than a minor preprocessing detail. General references such as \cite{EdelsbrunnerHarer2010,Carlsson2009,OtterEtAl2017} make the same point from a broader TDA perspective: persistent homology is rarely limited by the homology computation alone; the practical bottleneck is often the combinatorial size of the complex one chooses to build.

The literature after de Silva--Carlsson has not removed that dependence on landmark choice. On the reconstruction side, Guibas and Oudot study witness-complex reconstruction, while Attali, Edelsbrunner, Harer, and Mileyko analyze alpha--beta witness complexes and related witness-side conditions \cite{GuibasOudot2008,AttaliEHM2007}. Dey, Fan, and Wang propose graph-induced complexes partly to keep the economy of a subsample while recovering some of the topological power lost by plain witness constructions \cite{DeyFanWang2015}. Arafat, Basu, and Bressan show that when the landmarks form an $\varepsilon$-net, one can relate lazy witness complexes to Vietoris--Rips complexes on the landmark set \cite{ArafatBasuBressan2019}. This is valuable because it gives a witness-side approximation theorem under a specific geometric hypothesis. At the same time, it confirms that the witness approximation question is not independent of the landmark-selection rule: the geometry of the landmarks is exactly the hypothesis. Herick, Joachim, and Vahrenhold push in a broader approximation direction and explicitly note that a general theory for witness-based approximation of full Vietoris--Rips persistence is still not available in arbitrary settings \cite{HerickJoachimVahrenhold2024}. This is one reason the theory is kept local and algorithmic.

\subsection{Landmark selection and robust subsampling}
A second nearby literature concerns robust subsampling and landmark selection for persistent homology. Stolz studies several outlier-robust landmark-selection rules and places random and maxmin at the center of the comparison because they remain the standard practical baselines in lazy-witness experiments \cite{Stolz2023}. Chazal, Fasy, Lecci, Michel, Rinaldo, and Wasserman study subsampling methods for persistent homology from a broader statistical perspective \cite{ChazalEtAl2015}. Brunson and Skaf introduce the \emph{lastfirst} sampler and compare it against maxmin and related alternatives in homology-detection tasks \cite{BrunsonSkaf2022}. Together these works confirm that robust subsampling is still an active topic rather than a settled preprocessing step.

Our method family differs in style from the PH-landmark approach of Stolz and from direct covering samplers such as $\varepsilon$-nets or lastfirst. PH-landmarks use local persistent homology to score data points directly, whereas the present paper starts from a computationally inexpensive geometric covering method (maxmin) and only then applies a local depth correction inside each cell. Conceptually, this places the proposed methods between plain geometric spacing and topological scoring. The methods require less computation than evaluating local PH scores throughout the cloud, but impose more structure than random landmark selection or maxmin alone. They are also intentionally local: rather than replacing maxmin by a new global sampler, they retain the maxmin cell structure and only alter the representative chosen for each cell. That design choice is what makes the subsequent cover bounds simple enough to prove.

\subsection{Halfspace depth, centerpoints, and algorithmic depth}
The geometric input comes from halfspace depth and centerpoints. The survey of B\'ar\'any and Sober\'on is a convenient reference for the broader robust-geometric context in which these notions sit \cite{BaranySoberon2018}. Algorithmically, there is a mature computational-geometry literature behind this viewpoint. Exact planar centerpoints can be computed in linear time \cite{JadhavMukhopadhyay1994}; more general algorithms for depth-related representative points appear in \cite{AgarwalSharirWelzl2008}; and in fixed dimension there are efficient approximation algorithms for centerpoint-type objects \cite{MulzerWerner2013}. For the present paper, however, we do not need the full algorithmic machinery. Our experiments use a simpler subset-based implementation: in each cell we choose the data point of maximum empirical halfspace depth among the cell points. This keeps the landmarks inside the data and makes the experimental pipeline easier to compare against ordinary landmark subsets.

\subsection{Topological shape analysis and silhouette benchmarks}
The silhouette-based benchmark is also easier to interpret when placed near the shape-analysis literature. Persistent summaries of shapes have been used in several ways, ranging from the persistent homology transform of Turner, Mukherjee, and Boyer to the statistical analysis of landmark-based shape data by Gamble and Heo \cite{TurnerMukherjeeBoyer2014,GambleHeo2010}. On the computer-vision side, the MPEG-7 silhouette archive became a standard test bed for contour-based shape descriptors and matching methods, including shape contexts and later contour descriptors \cite{BelongieMalikPuzicha2002,LateckiLakemperEckhardt2000}. The present paper does not compete with that recognition literature. It uses MPEG-7 in a much narrower way, namely as a source of varied planar outlines on which the landmark-selection tradeoff can be stress-tested beyond hand-designed loop families.

\subsection{The actual gap}
Within this literature, the gap addressed here is comparatively narrow. The witness-landmark literature already includes random landmarks, maxmin landmarks, $\varepsilon$-net landmarks, local-PH landmarks, and broader adaptive approximation schemes, but, to the best of our knowledge, it does not yet contain a coverage-first correction of maxmin based on local centerpoints or deepest-point representatives of the induced maxmin cells. This is not meant as a claim that depth or centerpoints have never been used elsewhere in data analysis. The more specific point is that the lazy-witness literature on landmark selection does not appear to include the particular maxmin-plus-depth correction studied in the present paper. That is the level of novelty claimed here, and it is the level supported by the mathematical and experimental evidence reported below.

\section{Preliminaries and notation}
\subsection{Point clouds, seeds, cells, and cover radius}
Let $X=\{x_1,\dots,x_n\}\subset \R^d$ be a finite point cloud and let
\[
S=\{s_1,\dots,s_m\}\subset X
\]
be a set of seeds or landmarks. The induced Voronoi-style cell decomposition of $X$ is
\[
P_i = \bigl\{x\in X : d(x,s_i) \le d(x,s_j)\ \text{for all } j\bigr\}, \qquad i=1,\dots,m,
\]
with ties broken deterministically. The \emph{cover radius} of the seed set on $X$ is
\[
r_S(X) := \max_{x\in X} d(x,S)=\max_{x\in X}\min_{1\le j\le m} d(x,s_j).
\]
For maxmin seeds we write simply $r$ when the underlying cloud and seed set are clear.

\subsection{Halfspace depth and deep representatives}
\begin{definition}[Halfspace depth]
Let $P\subset \R^d$ be finite and let $y\in \R^d$. The halfspace depth of $y$ with respect to $P$ is
\[
\hd_P(y) := \min\bigl\{|H\cap P| : H\ \text{is a closed halfspace in }\R^d\text{ with } y\in H\bigr\}.
\]
A point $y$ is a \emph{centerpoint} of $P$ if
\[
\hd_P(y) \ge \left\lceil \frac{|P|}{d+1} \right\rceil.
\]
\end{definition}

A \emph{deep representative} of a finite cell $P\subset \R^d$ means either a continuous centerpoint of $P$ or, in the subset-valued implementation used in the experiments, a data point of maximum empirical halfspace depth.

In the implementation used in the experiments, we do not compute a geometric centerpoint in the continuous plane. Instead, for each cell $P_i$ we choose a \emph{deepest data point}
\[
a_i \in \arg\max_{p\in P_i} \hd_{P_i}(p).
\]
This is a discrete analogue of a centerpoint. In the two-dimensional benchmark, the depth is computed naively by exhaustive candidate evaluation inside each cell.

\subsection{Lazy witness filtration}
Let $L=\{\ell_1,\dots,\ell_m\}$ be the landmark set and let $W=\{w_1,\dots,w_N\}$ be the witnesses. Write $D(a,i)=d(\ell_a,w_i)$ for the landmark--witness distance matrix. Following de Silva and Carlsson, for a parameter $\nu\ge 0$ and scale $R\ge 0$ one defines the witness complex $W(D;R,\nu)$ by first setting $m_i$ equal to the $\nu$-th smallest landmark distance from witness $w_i$ (or $0$ if $\nu=0$), and then inserting an edge $[ab]$ whenever
\[
\max(D(a,i),D(b,i)) \le R + m_i
\]
for some witness $w_i$ \cite{deSilvaCarlsson2004}. The \emph{lazy witness complex} is the clique complex of this one-skeleton. Since the paper focuses on $H_1$, the lazy witness construction is the natural witness variant to benchmark.

\section{The method family}
The paper is most naturally viewed as a study of a small method family rather than of a single heuristic. The family starts from maxmin because maxmin is the natural coverage-first baseline. We then apply increasingly mild local depth corrections inside the induced cells.

\subsection{Method 0: maxmin}
Fix a first seed $s_1\in X$. The classical maxmin rule chooses
\[
s_{t+1} \in \arg\max_{x\in X} d\bigl(x,\{s_1,\dots,s_t\}\bigr), \qquad t=1,\dots,m-1.
\]
In the experiments, the first seed is fixed deterministically so that the comparison is reproducible. Maxmin supplies the spacing and coarse cover that our later corrections inherit.

\subsection{Method 1: recentered maxmin}
\paragraph{Ideal description.}
Compute maxmin seeds $S=\{s_1,\dots,s_m\}$ and their cells $P_1,\dots,P_m$. For each cell choose a deep representative $c_i$ of $P_i$ and set the new landmark to that representative.

\paragraph{Subset implementation.}
In the actual benchmark, the representative is the deepest data point
\[
a_i \in \arg\max_{p\in P_i} \hd_{P_i}(p),
\]
and the output landmark set is
\[
L^{\mathrm{rec}}=\{a_1,\dots,a_m\}.
\]
Thus the final landmarks remain a subset of the original cloud.

\paragraph{Interpretation.}
This is the basic depth-corrected method. If a cell contains a convex inlier core and relatively few contaminants, any sufficiently deep representative must stay inside that core. At the same time, because the representative remains inside the cell, the recentered landmarks retain a coarse cover guarantee inherited from maxmin.

\subsection{Method 2: fixed-step partial recentering}
Full recentering can over-correct when loop recovery depends strongly on boundary coverage. The first refinement is therefore to move each seed only part of the way toward its local representative. For a step parameter $\alpha\in[0,1]$, define
\[
z_i=(1-\alpha)s_i+\alpha a_i,
\]
and then project to the nearest point of the same cell,
\[
\ell_i\in \arg\min_{x\in P_i} d(x,z_i).
\]
When $\alpha=1$ this reduces to full subset recentering; when $\alpha=0$ it is just maxmin.

\begin{definition}[Fixed-step partial recentering]
Given maxmin seeds $S=\{s_1,\dots,s_m\}$, cells $P_1,\dots,P_m$, deepest points $a_i\in P_i$, and a parameter $\alpha\in[0,1]$, the \emph{fixed-step partial recentering} rule is the cellwise map
\[
(S,P_i,a_i)_{i=1}^m \longmapsto L^{\mathrm{fs}}_{\alpha}=\{\ell_1,\dots,\ell_m\},
\]
where each $\ell_i\in P_i$ is chosen to minimize $d(x,z_i)$ over $x\in P_i$ for the intermediate point $z_i=(1-\alpha)s_i+\alpha a_i$.
\end{definition}

\subsection{Method 3: support-weighted partial recentering}
The experiments indicate that the most favorable empirical compromise is to let the step size depend on the support of the cell. Write $n_i=|P_i|$ and $\bar n = |X|/m$ for the average cell size. Fix parameters $\alpha_{\max}\in(0,1]$ and $\tau>0$, and define
\[
\alpha_i = \alpha_{\max}\min\!\left(1,\frac{n_i}{\tau \bar n}\right).
\]
Then set
\[
z_i=(1-\alpha_i)s_i+\alpha_i a_i,
\qquad
\ell_i\in \arg\min_{x\in P_i} d(x,z_i).
\]
Large, well-supported cells move nearly as much as the fixed-step method, whereas small cells move less.

\begin{definition}[Support-weighted partial recentering]
Given maxmin seeds $S=\{s_1,\dots,s_m\}$, cells $P_1,\dots,P_m$, deepest points $a_i\in P_i$, and parameters $\alpha_{\max}\in(0,1]$ and $\tau>0$, the \emph{support-weighted partial recentering} rule is the cellwise map
\[
(S,P_i,a_i)_{i=1}^m \longmapsto L^{\mathrm{sw}}_{\alpha_{\max},\tau}=\{\ell_1,\dots,\ell_m\},
\]
where
\[
\alpha_i = \alpha_{\max}\min\!\left(1,\frac{n_i}{\tau \bar n}\right),
\qquad
z_i=(1-\alpha_i)s_i+\alpha_i a_i,
\qquad
\ell_i\in \arg\min_{x\in P_i} d(x,z_i),
\]
with $n_i=|P_i|$ and $\bar n=|X|/m$. Throughout the paper, this term refers only to this explicitly defined cellwise rule.
\end{definition}

Among the tested variants, this rule gives the most favorable empirical balance between geometric improvement and preservation of the thresholded $H_1$-count summary.

\subsection{Pseudocode sketches}
\paragraph{Algorithm A: subset recentered maxmin.}
\begin{enumerate}[leftmargin=1.5em]
    \item Compute maxmin seeds $S=\{s_1,\dots,s_m\}$.
    \item Assign every data point of $X$ to its nearest seed and obtain cells $P_1,\dots,P_m$.
    \item For each cell $P_i$, compute a deepest point $a_i\in P_i$.
    \item Output landmarks $L=\{a_1,\dots,a_m\}$.
\end{enumerate}

\paragraph{Algorithm B: support-weighted partial recentering.}
\begin{enumerate}[leftmargin=1.5em]
    \item Compute maxmin seeds and induced cells as above.
    \item For each cell $P_i$, compute a deepest point $a_i\in P_i$.
    \item Set $\alpha_i = \alpha_{\max}\min(1,n_i/(\tau\bar n))$.
    \item Form $z_i=(1-\alpha_i)s_i+\alpha_i a_i$ and project to the nearest point of $P_i$.
    \item Output the projected landmarks.
\end{enumerate}

\subsection{Why the main method uses a cell-restricted subset projection}
The main method deliberately uses a cell-restricted subset projection: it projects $z_i$ to a point of the \emph{same} cell $P_i$, not to the nearest point of the entire cloud. This preserves the local interpretation of the correction: each maxmin seed keeps exactly one representative of its own region of influence. It also prevents a corrected landmark from drifting across a cell boundary and effectively drawing support from a neighboring seed. Geometrically, this is the reason why \Cref{prop:two-r-cover,prop:snapped-damped,cor:snapped-universal} apply directly to the implemented method. Empirically, the same restriction makes the comparison more interpretable, because every variant can still be read as ``maxmin plus one local post-processing step per cell'' rather than as a completely different global sampler.

\section{Theory: local robustness and cover bounds}
The theory in this section is deliberately local and limited in scope. It explains why depth-based representatives are robust inside cells and why the method family retains coarse coverage. It does not attempt to prove a global approximation theorem for witness persistence.

\subsection{Depth prevents escape from a convex core}
The basic statement is slightly more general than the centerpoint threshold and is useful for the discrete deepest-point implementation.

\begin{lemma}[Convex-core robustness in \(\R^d\)]\label{lem:convex-core-general}
Let $P\subset \R^d$ be finite, let $C\subset \R^d$ be a closed convex set, and let $y\in \R^d$. If
\[
|P\setminus C| < \hd_P(y),
\]
then $y\in C$.
\end{lemma}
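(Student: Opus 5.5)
We argue by contraposition: assuming $y\notin C$, we exhibit a closed halfspace $H$ with $y\in H$ and $|H\cap P|\le |P\setminus C|$. By the definition of $\hd_P$ this forces $\hd_P(y)\le |P\setminus C|$, contradicting the hypothesis.

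The key tool is strict separation of a point from a closed convex set. If $y\notin C$ and $C$ is nonempty, then, because $C$ is closed and convex, the metric projection $\pi(y)$ of $y$ onto $C$ exists. Set $u=y-\pi(y)\neq 0$. The standard projection inequality $\langle u, c-\pi(y)\rangle\le 0$ for all $c\in C$ gives $\langle u,c\rangle\le\langle u,\pi(y)\rangle<\langle u,y\rangle$ for every $c\in C$. Now take
\[
H=\{x\in\R^d:\langle u,x\rangle\ge \langle u,y\rangle\}.
\]
This is a closed halfspace, it contains $y$, and it is disjoint from $C$. Hence $H\cap P\subseteq P\setminus C$, and therefore $\hd_P(y)\le |H\cap P|\le |P\setminus C|$.

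The degenerate case $C=\emptyset$ needs separate treatment, since there is no projection. Here $|P\setminus C|=|P|$, and every halfspace $H$ satisfies $|H\cap P|\le |P|$, so $\hd_P(y)\le|P|$. The hypothesis $|P|<\hd_P(y)$ is then impossible, and the statement holds vacuously. The same remark covers $P=\emptyset$.

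There is no serious obstacle. The only point needing care is that the separating halfspace must be \emph{closed} and must contain $y$, because depth is defined as a minimum over closed halfspaces through $y$. Choosing the level set exactly at $\langle u,y\rangle$, with the projection giving a strict inequality on $C$, handles this: the closed halfspace still misses $C$ entirely.
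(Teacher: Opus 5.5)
Your proof is correct and follows the same route as the paper: assume $y\notin C$, separate $y$ from $C$ by a closed halfspace $H$ with $y\in H$ and $H\cap C=\varnothing$, and conclude $\hd_P(y)\le |H\cap P|\le |P\setminus C|$. The only difference is that you construct the separating halfspace explicitly via the metric projection and treat $C=\varnothing$ separately, whereas the paper simply cites the hyperplane separation theorem.
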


\begin{proof}
Assume for contradiction that $y\notin C$. Since $C$ is closed and convex, the hyperplane separation theorem yields a closed halfspace $H$ such that $y\in H$ and $H\cap C=\varnothing$. Hence every point of $H\cap P$ lies in $P\setminus C$, so
\[
|H\cap P| \le |P\setminus C| < \hd_P(y).
\]
But $H$ is a closed halfspace containing $y$, contradicting the definition of $\hd_P(y)$.
\end{proof}

\begin{corollary}[Centerpoint and deepest-point versions]\label{cor:centerpoint-core}
Let $P\subset \R^d$ be finite and let $C\subset \R^d$ be a closed convex set.
\begin{enumerate}[label=(\roman*)]
    \item If $|P\setminus C| < |P|/(d+1)$, then every centerpoint of $P$ lies in $C$.
    \item If $a\in P$ is a deepest data point and $|P\setminus C| < \hd_P(a)$, then $a\in C$.
\end{enumerate}
\end{corollary}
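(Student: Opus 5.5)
Both parts follow directly from \Cref{lem:convex-core-general}: in each case I only need to check that the depth of the point in question strictly exceeds $|P\setminus C|$, and then the lemma places that point in $C$.

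\emph{Part (i).} Let $y$ be a centerpoint of $P$. By definition,
\[
\hd_P(y)\ \ge\ \left\lceil \frac{|P|}{d+1}\right\rceil\ \ge\ \frac{|P|}{d+1}.
\]
Combining this with the hypothesis $|P\setminus C|<|P|/(d+1)$ gives the strict inequality
\[
|P\setminus C|\ <\ \frac{|P|}{d+1}\ \le\ \hd_P(y).
\]
\Cref{lem:convex-core-general}, applied with this $y$, then yields $y\in C$. Since $y$ was an arbitrary centerpoint, every centerpoint lies in $C$. Because $|P\setminus C|$ is an integer, the hypothesis is in fact equivalent to $|P\setminus C|<\lceil |P|/(d+1)\rceil$. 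This sharper form is not needed here.

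\emph{Part (ii).} Apply \Cref{lem:convex-core-general} with $y=a$. The hypothesis $|P\setminus C|<\hd_P(a)$ is exactly the lemma's hypothesis, so $a\in C$. The fact that $a$ is a \emph{deepest} data point and that $a\in P$ play no role in the logic. They only indicate the intended application, namely the cell representatives $a_i$ of the subset implementation.

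There is no real obstacle; the only step requiring care is the strict-versus-weak inequality in (i), handled by the chain above. I might add a remark that the hypothesis of (i) is nonvacuous only when a centerpoint exists, which is guaranteed by the centerpoint theorem. The corollary, however, does not depend on existence: it asserts a property of every centerpoint, and holds vacuously if there were none.
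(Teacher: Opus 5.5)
Your proposal is correct and is exactly the paper's argument: (i) feeds the centerpoint bound $\hd_P(y)\ge\lceil |P|/(d+1)\rceil\ge |P|/(d+1)>|P\setminus C|$ into \Cref{lem:convex-core-general}, and (ii) is that lemma specialized to $y=a$. You are also right that the hypotheses ``$a\in P$'' and ``deepest'' play no logical role in (ii).
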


\begin{proof}
Part (i) follows from \Cref{lem:convex-core-general} and the centerpoint depth bound. Part (ii) is the same statement specialized to $y=a$.
\end{proof}

This corollary is the precise form in which the centerpoint-based input enters the paper. It does not assert that a cell representative solves a global topological problem. Rather, it shows that depth yields a local obstruction to leaving a convex core when only too few points of the cell lie outside that core. The closedness of $C$ in \Cref{lem:convex-core-general,cor:centerpoint-core} is material: if $C$ is merely convex but not closed, the same argument only implies that $y\in \overline{C}$. This is the correct level of generality for the separation argument, and it is why the statements are formulated for closed convex cores.

\begin{corollary}[Planar \(1/3\)-contamination threshold]\label{cor:planar-third}
Let $P\subset \R^2$ be finite and let $C\subset \R^2$ be a closed convex set. If more than $2|P|/3$ points of $P$ lie in $C$, then every centerpoint of $P$ lies in $C$. In the subset implementation, any deepest data point $a\in P$ also lies in $C$ whenever $|P\setminus C|<\hd_P(a)$.
\end{corollary}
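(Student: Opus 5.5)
The plan is to specialize \Cref{cor:centerpoint-core} to $d=2$ and translate the hypothesis into the form needed there. The second assertion, about deepest data points, is exactly \Cref{cor:centerpoint-core}(ii) with no change, so all the work is in the centerpoint assertion.

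For the centerpoint part, I rewrite the hypothesis $|P\cap C|>2|P|/3$ as
\[
|P\setminus C| = |P|-|P\cap C| < |P|-\tfrac{2|P|}{3} = \tfrac{|P|}{3} = \tfrac{|P|}{d+1}
\]
with $d=2$. This is the hypothesis of \Cref{cor:centerpoint-core}(i), so every centerpoint of $P$ lies in $C$.

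If I instead argue directly from \Cref{lem:convex-core-general}, I need one check. A centerpoint $y$ has $\hd_P(y)\ge\lceil |P|/3\rceil\ge |P|/3$. The strict inequality $|P\setminus C|<|P|/3$ then gives $|P\setminus C|<\hd_P(y)$, and the lemma yields $y\in C$.

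The only point needing care is this interplay between the strict inequality and the ceiling: passing from $\lceil |P|/3\rceil$ to $|P|/3$ only weakens the lower bound on depth, so the strict inequality survives. Beyond that there is no real obstacle. The closedness of $C$ is already built into the cited results, and the planar centerpoint theorem is not needed, since the statement concerns every centerpoint and is vacuous if none existed.
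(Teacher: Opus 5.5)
Your proposal is correct and matches the paper's proof, which takes $d=2$ in \Cref{cor:centerpoint-core}(i) for the centerpoint claim and quotes part (ii) for the deepest-point claim. Your explicit rewriting $|P\setminus C| = |P|-|P\cap C| < |P|/3$ and your check that the ceiling $\lceil |P|/3\rceil$ does not weaken the strict inequality supply exactly the details the paper leaves implicit.
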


\begin{proof}
The centerpoint claim is the case $d=2$ of \Cref{cor:centerpoint-core}(i). The deepest-point claim is \Cref{cor:centerpoint-core}(ii).
\end{proof}

\subsection{Geometry inherited from maxmin}
We next justify the claim that recentering retains the coarse cover supplied by maxmin.

\begin{lemma}[Every cell lies in a maxmin ball]\label{lem:cell-ball}
Let $S=\{s_1,\dots,s_m\}\subset X$ be any seed set with cover radius $r=r_S(X)$, and let $P_i$ be the induced cells on $X$. Then
\[
P_i \subseteq B(s_i,r)
\qquad\text{for each } i.
\]
Consequently, $\conv(P_i)\subseteq B(s_i,r)$ as well.
\end{lemma}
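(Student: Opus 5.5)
The argument is a direct unwinding of the definitions of the cells and of the cover radius. Fix an index $i$ and a point $x\in P_i$. By the definition of $P_i$, the seed $s_i$ is a nearest seed to $x$, that is, $d(x,s_i)\le d(x,s_j)$ for every $j$. It follows that $d(x,s_i)=\min_{1\le j\le m} d(x,s_j)=d(x,S)$. Since $x\in X$, the definition of the cover radius gives $d(x,S)\le \max_{x'\in X} d(x',S)=r_S(X)=r$. Hence $d(x,s_i)\le r$, so $x\in B(s_i,r)$, where $B(s_i,r)$ is the closed Euclidean ball of radius $r$ centred at $s_i$. Because $x\in P_i$ was arbitrary, $P_i\subseteq B(s_i,r)$.

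The deterministic tie-breaking does not affect this. Whichever nearest seed a tied point is assigned to, that seed still realizes the minimum distance $d(x,S)$. So the inequality $d(x,s_i)=d(x,S)$ holds for every point placed in $P_i$, regardless of how ties are resolved.

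For the convex-hull statement, the closed ball $B(s_i,r)$ in $\R^d$ is convex: for $u,v$ in the ball and $t\in[0,1]$, the triangle inequality gives $\|(1-t)u+tv-s_i\|\le (1-t)\|u-s_i\|+t\|v-s_i\|\le r$. The convex hull $\conv(P_i)$ is the smallest convex set containing $P_i$, so the inclusion $P_i\subseteq B(s_i,r)$ implies $\conv(P_i)\subseteq B(s_i,r)$.

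There is no real obstacle here. The only points requiring care are two conventions. First, the ball must be taken closed, since the maximum defining $r$ is attained and some point lies at distance exactly $r$. Second, the distance $d$ is the Euclidean one, which is what makes the ball convex. This convexity is presumably what later lets the projected points $z_i$, and any continuous centerpoint of $P_i$, be controlled by the same radius.
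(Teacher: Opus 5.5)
Your proof is correct and follows the paper's argument: membership in $P_i$ makes $s_i$ a nearest seed, so $d(x,s_i)=d(x,S)\le r$, and convexity of the closed Euclidean ball then gives $\conv(P_i)\subseteq B(s_i,r)$. Your remarks on tie-breaking and on taking the ball closed are accurate additions that the paper leaves implicit.
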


\begin{proof}
If $x\in P_i$, then by definition $s_i$ is one of the nearest seeds to $x$. Hence $d(x,s_i)=d(x,S)\le r$. This proves $P_i\subseteq B(s_i,r)$. Since Euclidean balls are convex, the convex hull of $P_i$ is contained in the same ball.
\end{proof}

\begin{proposition}[Subset recentering preserves a \(2r\)-cover]\label{prop:two-r-cover}
Let $S=\{s_1,\dots,s_m\}\subset X$ have cover radius $r$, let $P_1,\dots,P_m$ be the induced cells, and choose any representative $\ell_i\in P_i$ for each cell. Then the landmark set
\[
L=\{\ell_1,\dots,\ell_m\}
\]
is a $2r$-cover of $X$.
\end{proposition}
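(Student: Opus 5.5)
The plan is a direct triangle-inequality argument built on \Cref{lem:cell-ball}. Fix an arbitrary point $x\in X$. Since the cells $P_1,\dots,P_m$ cover $X$ (every point has at least one nearest seed, and ties are broken deterministically), there is an index $i$ with $x\in P_i$. The aim is to show $d(x,\ell_i)\le 2r$. Then $d(x,L)\le 2r$, and taking the maximum over $x$ gives $r_L(X)\le 2r$, which is exactly the $2r$-cover claim.

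The key step is to route through the original seed $s_i$. By \Cref{lem:cell-ball}, $P_i\subseteq B(s_i,r)$. Since both $x$ and the chosen representative $\ell_i$ lie in $P_i$, we get $d(x,s_i)\le r$ and $d(\ell_i,s_i)\le r$. The triangle inequality then gives
\[
d(x,\ell_i)\le d(x,s_i)+d(s_i,\ell_i)\le 2r.
\]
Nothing about depth is used here. The only hypothesis that matters is $\ell_i\in P_i$, so the bound applies simultaneously to deepest points, continuous-centerpoint-free subset choices, and any cell-restricted projection. This generality is what later results can reuse when the projected landmarks of the partial-recentering rules are shown to lie in their own cells.

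I do not expect a genuine obstacle, but two points need care. First, the landmark set must have a representative for every cell. This requires that each $P_i$ be nonempty, which holds because $s_i\in X$ and $d(s_i,s_i)=0$. The deterministic tie-breaking should therefore be taken to assign each seed to its own cell; if two seeds coincide, the corresponding indices can be merged without affecting the bound. Second, I will state explicitly that the closed ball $B(s_i,r)$ is used, so that points at distance exactly $r$ are included.
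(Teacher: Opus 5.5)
Your proposal is correct and is essentially the paper's proof: place $x$ in its cell $P_i$, use the cell-in-ball lemma to get $d(x,s_i)\le r$ and $d(s_i,\ell_i)\le r$, and apply the triangle inequality. Your side remarks on nonempty cells and closed balls are harmless extras; nonemptiness is not even needed, since the argument only ever uses the cell that already contains $x$.
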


\begin{proof}
Fix $x\in X$, say $x\in P_i$. By \Cref{lem:cell-ball}, both $x$ and $\ell_i$ lie in $B(s_i,r)$. Therefore
\[
d(x,\ell_i) \le d(x,s_i)+d(s_i,\ell_i) \le r+r = 2r.
\]
Since $x$ was arbitrary, $L$ is a $2r$-cover of $X$.
\end{proof}

\begin{remark}
\Cref{prop:two-r-cover} is coarse by design. It does not say that the recentered landmarks preserve the exact maxmin cover radius. What it gives is a simple worst-case guarantee that is independent of the depth calculation and independent of how irregular the cells are.
\end{remark}

\subsection{Partial recentering and a projected cover bound}
The partial-recentering variants move inside the convex hull of a cell rather than necessarily to a data point of the cell. This leads to a stronger continuous bound and then to a discrete bound with an explicit projection term.

\begin{proposition}[Continuous partial-recentering cover bound]\label{prop:continuous-damped}
Let $S=\{s_1,\dots,s_m\}\subset X$ have cover radius $r$, let $P_1,\dots,P_m$ be the induced cells, and choose points $c_i\in \conv(P_i)$ together with coefficients $\alpha_i\in[0,\alpha_{\max}]$. Define
\[
z_i=(1-\alpha_i)s_i+\alpha_i c_i.
\]
Then the continuous landmark set $Z=\{z_1,\dots,z_m\}$ is a $(1+\alpha_{\max})r$-cover of $X$.
\end{proposition}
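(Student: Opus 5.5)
The plan is to mirror the proof of \Cref{prop:two-r-cover}, but to measure how far each $z_i$ moves from its seed $s_i$ rather than from an arbitrary representative. Fix $x\in X$ and choose $i$ with $x\in P_i$. We will bound $d(x,z_i)$ by the triangle inequality through the seed:
\[
d(x,z_i)\le d(x,s_i)+d(s_i,z_i).
\]
The first term is at most $r$, because $s_i$ is a nearest seed to $x$; equivalently, $P_i\subseteq B(s_i,r)$ by \Cref{lem:cell-ball}.

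For the second term, we use the definition of $z_i$ and write $z_i-s_i=\alpha_i(c_i-s_i)$. This gives
\[
d(s_i,z_i)=\alpha_i\,d(s_i,c_i).
\]
Since $c_i\in\conv(P_i)$, the second assertion of \Cref{lem:cell-ball} gives $\conv(P_i)\subseteq B(s_i,r)$, so $d(s_i,c_i)\le r$. Because $\alpha_i\le\alpha_{\max}$, we obtain $d(s_i,z_i)\le\alpha_{\max}r$.

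Combining the two bounds yields $d(x,Z)\le d(x,z_i)\le(1+\alpha_{\max})r$. Since $x\in X$ was arbitrary, $Z$ is a $(1+\alpha_{\max})r$-cover of $X$.

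No step here is a real obstacle. The only points needing care are that $c_i$ may lie outside $P_i$, which is handled by the convex-hull clause of \Cref{lem:cell-ball}, and that the point $z_i$ need not lie in $X$. The covering notion only requires $\max_{x\in X}\min_j d(x,z_j)\le(1+\alpha_{\max})r$, so $z_i$ lying outside $X$ is harmless. Note that this argument does not use convexity of the ball at $z_i$ itself. It only uses the fact that the displacement is scaled by $\alpha_i$.
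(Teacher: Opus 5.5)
Your proposal is correct and follows essentially the same argument as the paper: the triangle inequality through the seed $s_i$, with $d(x,s_i)\le r$ and $d(s_i,z_i)=\alpha_i d(s_i,c_i)\le \alpha_{\max} r$, both coming from \Cref{lem:cell-ball}. Your added remarks, that $c_i$ may lie outside $P_i$ and $z_i$ outside $X$, are accurate and harmless to the argument.
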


\begin{proof}
By \Cref{lem:cell-ball}, each $c_i\in \conv(P_i)$ lies in $B(s_i,r)$, so $d(s_i,c_i)\le r$. Hence
\[
d(s_i,z_i)=\alpha_i d(s_i,c_i)\le \alpha_i r\le \alpha_{\max}r.
\]
Now fix $x\in P_i$. Again by \Cref{lem:cell-ball}, $d(x,s_i)\le r$. Therefore
\[
d(x,z_i)\le d(x,s_i)+d(s_i,z_i)\le r+\alpha_{\max}r=(1+\alpha_{\max})r.
\]
\end{proof}

\begin{proposition}[Projected partial-recentering cover bound]\label{prop:snapped-damped}
Under the hypotheses of \Cref{prop:continuous-damped}, let
\[
q_i := \dist(z_i,P_i)=\min_{x\in P_i} d(z_i,x)
\qquad\text{and}\qquad
q:=\max_{1\le i\le m} q_i.
\]
Choose a projected landmark $\ell_i\in P_i$ with $d(z_i,\ell_i)=q_i$ for each $i$. Then the projected landmark set
\[
L=\{\ell_1,\dots,\ell_m\}
\]
is a $((1+\alpha_{\max})r+q)$-cover of $X$.
\end{proposition}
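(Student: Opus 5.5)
The plan is to reduce the statement to \Cref{prop:continuous-damped} by a single triangle inequality, with the projection term $q$ absorbing the discrepancy between the continuous point $z_i$ and the projected landmark $\ell_i$.

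First I would fix an arbitrary $x\in X$ and choose the index $i$ with $x\in P_i$; since the cells $P_1,\dots,P_m$ partition $X$ (ties being broken deterministically), such an $i$ exists. It suffices to bound $d(x,\ell_i)$, because $d(x,L)\le d(x,\ell_i)$.

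Next I would invoke the estimate established inside the proof of \Cref{prop:continuous-damped}. That proof combines \Cref{lem:cell-ball} with $c_i\in\conv(P_i)\subseteq B(s_i,r)$ to give, for every $x\in P_i$,
\[
d(x,z_i)\le (1+\alpha_{\max})r .
\]
The point to note is that this bound holds for the cell's own point $z_i$, not merely for some point of $Z$. I would therefore cite the cellwise inequality rather than only the cover conclusion.

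Then I would apply the triangle inequality through $z_i$:
\[
d(x,\ell_i)\le d(x,z_i)+d(z_i,\ell_i)\le (1+\alpha_{\max})r+q_i\le (1+\alpha_{\max})r+q .
\]
Here $d(z_i,\ell_i)=q_i$ holds by the choice of $\ell_i$. The minimum defining $q_i$ is attained because $P_i$ is finite and nonempty; it is nonempty since $s_i\in P_i$, as $d(s_i,s_i)=0$ and ties are resolved in favour of $s_i$. Taking the maximum over $x\in X$ then gives the cover bound.

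The only step needing care is the cellwise use of \Cref{prop:continuous-damped} described above. Everything else in the argument is routine.
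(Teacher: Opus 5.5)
Your proposal is correct and is essentially the paper's proof: fix $x\in P_i$, use the cellwise bound $d(x,z_i)\le(1+\alpha_{\max})r$ from \Cref{prop:continuous-damped}, and add $d(z_i,\ell_i)=q_i\le q$ via the triangle inequality. Your nonemptiness remark needs no tie-breaking, since for distinct seeds $s_i$ is strictly nearest to itself. It is also already implicit in the hypotheses, which require $c_i\in\conv(P_i)$ and $\ell_i\in P_i$, so $P_i\neq\varnothing$.
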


\begin{proof}
Fix $x\in P_i$. By \Cref{prop:continuous-damped}, $d(x,z_i)\le (1+\alpha_{\max})r$. Also $d(z_i,\ell_i)=q_i\le q$. Thus
\[
d(x,\ell_i)\le d(x,z_i)+d(z_i,\ell_i)\le (1+\alpha_{\max})r+q.
\]
\end{proof}

\begin{corollary}[Two universal projected bounds]\label{cor:snapped-universal}
Under the hypotheses of \Cref{prop:snapped-damped}, assume in addition that every seed belongs to its own cell, i.e.\ $s_i\in P_i$ for all $i$, as in the maxmin-based methods studied here. Then
\[
q_i \le d(z_i,s_i) \le \alpha_i r \le \alpha_{\max}r,
\]
so the projected landmark set is a $((1+2\alpha_{\max})r)$-cover of $X$. Moreover, because every projected landmark $\ell_i$ lies in $P_i$, \Cref{prop:two-r-cover} also applies, and hence
\[
L \text{ is a } \min\{2,\;1+2\alpha_{\max}\}r\text{-cover of }X.
\]
In particular, if $\alpha_{\max}\le 1/2$, the step-size-dependent bound improves the coarse $2r$ estimate.
\end{corollary}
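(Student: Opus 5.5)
The plan is to bound the projection term $q_i$ by exhibiting an explicit competitor point of $P_i$, namely the seed $s_i$ itself. Since $s_i\in P_i$ by the added hypothesis, the definition $q_i=\min_{x\in P_i}d(z_i,x)$ gives $q_i\le d(z_i,s_i)$ at once. Next I compute $d(z_i,s_i)$ from $z_i=(1-\alpha_i)s_i+\alpha_i c_i$. This yields $z_i-s_i=\alpha_i(c_i-s_i)$, so $d(z_i,s_i)=\alpha_i\,d(c_i,s_i)$. \Cref{lem:cell-ball} places $c_i\in\conv(P_i)\subseteq B(s_i,r)$, hence $d(c_i,s_i)\le r$. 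This gives the chain $q_i\le\alpha_i r\le\alpha_{\max}r$, and taking the maximum over $i$ gives $q\le\alpha_{\max}r$.

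Substituting $q\le\alpha_{\max}r$ into \Cref{prop:snapped-damped} gives that $L$ is a $((1+\alpha_{\max})r+\alpha_{\max}r)=(1+2\alpha_{\max})r$-cover of $X$. The second bound needs no new work. Each projected $\ell_i$ is chosen in $P_i$ by construction, so \Cref{prop:two-r-cover} applies verbatim and $L$ is also a $2r$-cover. The covering property is monotone in the radius: if every $x$ is within $\rho_1$ and within $\rho_2$ of $L$, it is within $\min\{\rho_1,\rho_2\}$. Hence $L$ is a $\min\{2,1+2\alpha_{\max}\}r$-cover.

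For the final remark, note that $1+2\alpha_{\max}\le 2$ exactly when $\alpha_{\max}\le 1/2$. So in that regime the step-size bound is the active one, and it is strictly better when $\alpha_{\max}<1/2$.

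There is no real obstacle here. The only point requiring care is that the projection in \Cref{prop:snapped-damped} is onto $P_i$ rather than onto all of $X$, which is exactly why $s_i\in P_i$ must hold. For maxmin seeds this is automatic: $d(s_i,s_i)=0$, so $s_i$ is a nearest seed to itself. The deterministic tie-breaking must also assign $s_i$ to its own cell. This is guaranteed when seeds are distinct, since then any other seed is at positive distance from $s_i$.
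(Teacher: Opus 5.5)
Your proof is correct and matches the paper's argument: you use $s_i\in P_i$ as a competitor to get $q_i\le d(z_i,s_i)=\alpha_i\, d(c_i,s_i)\le\alpha_i r$ via \Cref{lem:cell-ball}, substitute this into \Cref{prop:snapped-damped}, and then invoke \Cref{prop:two-r-cover} for the subset-valued $\ell_i$. You also note that the improvement over $2r$ is strict only for $\alpha_{\max}<1/2$, since the two bounds coincide at $\alpha_{\max}=1/2$; this is a correct sharpening of the statement's wording.
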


\begin{proof}
Since $s_i\in P_i$, the distance from $z_i$ to $P_i$ is at most $d(z_i,s_i)$. The latter is bounded by $\alpha_i r$ exactly as in the proof of \Cref{prop:continuous-damped}. Substituting $q\le \alpha_{\max}r$ into \Cref{prop:snapped-damped} gives the first claim. The second follows because any subset-valued choice $\ell_i\in P_i$ also satisfies the hypotheses of \Cref{prop:two-r-cover}.
\end{proof}

\begin{remark}
The additive term $q$ isolates the price of insisting that the final landmarks remain a subset of the cloud via a projection back to the cell sample. In dense cells one expects $q$ to be small. In sparse or pathological cells, $q$ can be larger, which is precisely why the empirical method scales the amount of movement by cell support. For the actual maxmin-based algorithms, \Cref{cor:snapped-universal} gives an explicit fallback bound even without any density assumption.
\end{remark}

\begin{corollary}[Why support weighting is a natural safeguard]\label{cor:gating}
In the support-weighted partial method,
\[
d(s_i,z_i) \le \alpha_{\max}\min\!\left(1,\frac{|P_i|}{\tau\bar n}\right) r.
\]
In particular, cells with $|P_i|\ll \bar n$ are moved strictly less than well-supported cells.
\end{corollary}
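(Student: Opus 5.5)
The bound follows directly from the computation in the proof of \Cref{prop:continuous-damped}, specialized to the support-weighted step size. I would take $c_i=a_i$, the deepest data point of $P_i$. Since $a_i\in P_i\subseteq\conv(P_i)$, the hypotheses of that proposition hold.

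The first step is the identity $z_i-s_i=\alpha_i(a_i-s_i)$, which gives $d(s_i,z_i)=\alpha_i\,d(s_i,a_i)$. Next I apply \Cref{lem:cell-ball}: because $a_i\in P_i\subseteq B(s_i,r)$, we have $d(s_i,a_i)\le r$. Substituting the definition $\alpha_i=\alpha_{\max}\min(1,|P_i|/(\tau\bar n))$, with $n_i=|P_i|$, yields the displayed inequality.

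For the ``in particular'' clause, I would make precise what ``moved strictly less'' means. The claim concerns the upper bound on the movement, or equivalently the step coefficient $\alpha_i$. It cannot concern the actual displacement $d(s_i,z_i)$, because $d(s_i,a_i)$ could be $0$. When $|P_i|<\tau\bar n$, the coefficient is $\alpha_i=\alpha_{\max}|P_i|/(\tau\bar n)$. This is strictly smaller than the coefficient $\alpha_{\max}$ of any cell with $|P_j|\ge\tau\bar n$, and it is strictly increasing in $|P_i|$ below that threshold. Reading $|P_i|\ll\bar n$ as in particular $|P_i|<\tau\bar n$, for any fixed $\tau$ of order one, the conclusion follows.

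The only delicate point is this interpretive step. I would state explicitly that the comparison is between the guaranteed movement bounds, and between the coefficients $\alpha_i$ themselves. I would also note that a projected version follows immediately: applying \Cref{cor:snapped-universal} cellwise gives $q_i\le\alpha_i r$, so the projected landmark also satisfies $d(s_i,\ell_i)\le d(s_i,z_i)+q_i\le 2\alpha_i r$. This is again proportional to the support-dependent factor.
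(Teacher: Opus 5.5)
Your proposal is correct and follows the paper's argument: the paper's one-line proof also combines $d(s_i,z_i)=\alpha_i\,d(s_i,a_i)$, the bound $d(s_i,a_i)\le r$ from \Cref{lem:cell-ball}, and the definition of $\alpha_i$. Your remark that the ``strictly less'' clause can only refer to the coefficients $\alpha_i$ (equivalently, the movement bounds), not to actual displacements, is a valid sharpening of a point the paper leaves implicit.
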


\begin{proof}
Immediate from the definition of $\alpha_i$ and the bound $d(s_i,c_i)\le r$ from \Cref{lem:cell-ball}.
\end{proof}

\subsection{Computational remarks}
The straightforward implementation used in the experiments has three main stages.
\begin{enumerate}[leftmargin=1.5em]
    \item Maxmin seed selection takes $O(nm)$ distance updates.
    \item Cell assignment takes another $O(nm)$ distance pass.
    \item The discrete depth computation in a cell of size $k$ is implemented naively by checking all candidate points against all directed lines through candidate--point pairs, which is cubic in $k$ in the implementation used in the experiments.
\end{enumerate}
Thus the total cost of landmark selection is
\[
O(nm) + O\!\left(\sum_{i=1}^m |P_i|^3\right)
\]
in the implementation used in the experiments. This is acceptable in the present low-budget benchmark because the cells are small. A more specialized planar centerpoint or Tukey-depth routine could reduce this cost, but such an optimization is not required for the empirical comparison carried out here.

The lazy witness stage should also be kept in view. Once the landmarks are fixed, forming the landmark--witness distance matrix costs $O(mN)$ for $N$ witnesses, and computing all edge insertion values from the de Silva--Carlsson rule in the direct implementation costs $O(m^2N)$. The subsequent simplex-tree construction is data dependent, but in the present experiments $m\le 40$, so the witness side remains comfortably secondary to the repeated benchmark loop itself. This computational profile is one reason the paper emphasizes low landmark budgets: they are the regime in which witness constructions are most attractive and in which the local correction can still be computed exactly in the plane.

These computational remarks also indicate the natural limits of the present theory. The cover bounds proved above do not imply that the lazy witness filtration of the corrected landmarks is interleaved with the full Vietoris--Rips filtration of the cloud. Likewise, the convex-core lemma is local to a cell and says nothing about how the cells themselves should be chosen globally. Finally, the step parameter is justified geometrically, not as the optimizer of any global persistence objective. These limitations are deliberate. They keep the paper on mathematically solid ground and match the experiments, which show a geometric improvement together with topological preservation rather than a uniformly stronger barcode.

\section{Experimental design}
\subsection{Synthetic benchmark}
The main synthetic benchmark is controlled and planar by design. It uses three signal families:
\begin{enumerate}[label=(\roman*)]
    \item a noisy circle,
    \item two noisy circles,
    \item a planar figure-eight.
\end{enumerate}
These families were chosen because the correct $H_1$ behavior is transparent: one principal loop for the circle and two for the two-circle and figure-eight families. The signal clouds are then contaminated in two ways:
\begin{enumerate}[label=(\alph*)]
    \item uniform box outliers,
    \item compact outlier clusters.
\end{enumerate}
The intended regime is low landmark budgets, so the main budgets are $m\in\{20,30,40\}$. The confirmation benchmark uses $50$ random seeds per setting.

\subsection{Semi-real silhouette benchmark}
To move beyond purely synthetic loops while staying faithful to the planar theory, the paper adds a silhouette-based benchmark derived from the MPEG-7 CE Shape-1 Part B silhouette archive. We use a balanced subset of $120$ silhouettes, built from $12$ classes with $10$ images per class. Each silhouette is converted into a planar point cloud by sampling its foreground boundary; the same three contamination models as above are then applied:
\begin{enumerate}[label=(\alph*)]
    \item clean silhouettes (no added clutter),
    \item compact outlier clusters,
    \item uniform planar clutter.
\end{enumerate}
This case study is not meant as a shape-classification benchmark. Its role is to test whether the landmark-selection behavior observed on synthetic loops persists on standard planar silhouettes that are not manually designed for the present paper.

\subsection{Methods compared}
The main comparative benchmark uses four methods.
\begin{enumerate}[label=(\roman*)]
    \item \textbf{maxmin}, the classical witness-landmark baseline;
    \item \textbf{support-weighted partial recentering}, the main variant studied here;
    \item \textbf{$\varepsilon$-net matched}, a budget-matched covering baseline inspired by $\varepsilon$-net induced lazy witness constructions \cite{ArafatBasuBressan2019};
    \item \textbf{dense-core + maxmin}, a simple robust-denoising baseline that first keeps only a dense subset and then runs maxmin.
\end{enumerate}
The ablation study also discusses two intermediate variants, namely full recentering and fixed-step partial recentering, because they clarify why support weighting is needed in practice. Unless otherwise stated, the lazy witness filtration uses $\nu=1$ and maximal scale $R_{\max}=2.1$. The synthetic confirmation benchmark uses $(\alpha_{\max},\tau)=(0.6,1.0)$, which was the setting in the initial comparative study, whereas the parameter study and the MPEG-7 benchmark use $(\alpha_{\max},\tau)=(0.6,0.5)$.

\subsection{Persistence pipeline}
The implementation does not rely on a black-box witness-complex API. Instead, it computes the lazy witness filtration directly from the de Silva--Carlsson definition. For witnesses $w_i$ and landmarks $\ell_a$, one first forms the landmark--witness distances $D(a,i)$ and the witness offsets $m_i$ determined by the parameter $\nu$. One then computes the edge insertion values
\[
E(a,b)=\min_i \bigl(\max(D(a,i),D(b,i)) - m_i\bigr)_+,
\]
with $(t)_+=\max\{t,0\}$. Vertices, edges, and triangles are inserted into a simplex tree using those filtration values, and $H_1$ persistence is then computed with the GUDHI library, using its simplex-tree and persistent-homology functionality \cite{MariaEtAl2014,gudhi:urm,gudhi:FilteredComplexes,gudhi:PersistentCohomology}. The experiments use this direct implementation because it mirrors the theoretical lazy witness definition exactly.

\subsection{Benchmark size and pairing structure}
The paired tests reported later rely on a strict matched-trial design. In the synthetic confirmation benchmark, each method is evaluated on the same sampled signal cloud, the same noise realization, the same budget, and the same seed index. This yields
\[
3 \text{ datasets} \times 2 \text{ noise models} \times 3 \text{ budgets} \times 50 \text{ trials}=900
\]
matched trials per method. The silhouette-based benchmark is larger:
\[
120 \text{ silhouettes} \times 3 \text{ noise models} \times 3 \text{ budgets} \times 20 \text{ trials}=21{,}600
\]
matched trials per method. The pairing is important because the topological discordances are often rare; unpaired tests would throw away exactly the structure that makes the comparison informative.

\subsection{Metrics and statistical tests}
The benchmark uses both topological and geometric metrics.
\begin{enumerate}[label=(\roman*)]
    \item \emph{thresholded $H_1$-count accuracy}, meaning whether the number of finite bars with lifetime at least $0.25$ matches the known target value; this statistic will be referred to simply as the \emph{thresholded $H_1$-count} when no ambiguity can arise;
    \item the largest finite $H_1$ lifetime;
    \item the second largest finite $H_1$ lifetime;
    \item the ratio of the largest lifetime to the second one;
    \item a trimmed bottleneck distance obtained after removing very short bars;
    \item the number of outlier landmarks;
    \item the mean signal-cover radius, namely the average distance from an inlier or signal point to its nearest landmark;
    \item the total number of simplices in the lazy witness complex.
\end{enumerate}
The thresholded $H_1$-count summary is intentionally coarse. Under low landmark budgets, small extra bars are often caused by witness sparsity or clutter rather than by the principal loop structure of the signal. The thresholded count statistic therefore isolates the part of the barcode that is most stable and most directly interpretable across all synthetic families and silhouette variants.

The main significance analysis is paired against maxmin because maxmin is the natural classical baseline. For continuous-valued metrics such as mean cover, the paper reports paired Wilcoxon signed-rank tests and bootstrap confidence intervals for the mean difference. For thresholded $H_1$-count accuracy, where discordant pairs are often rare, the paper uses exact paired tests on the discordances. The MPEG-7 section also reports the end-to-end wall-clock time of the full benchmark run because that figure is more reliable than the lightweight per-method timing fields recorded during the earlier scripts.

The archived result bundles and supporting scripts for these experiments have been deposited in a Zenodo dataset record \cite{ZhangZenodo2026}; the persistent link is given in the Data availability statement.

\section{Ablation study and design choices}
The ablation study compares four candidate corrections and explains why support-weighted partial recentering is the final recommended variant. To keep this section empirical rather than anecdotal, Table~\ref{tab:ablation-aggregate-v12} summarizes the earlier 20-trial design study that still included random landmark selection and full recentering. Its role is diagnostic rather than confirmatory: it identifies which variants are worth carrying into the larger confirmation benchmark.

\begin{table}[ht]
\centering
\caption{Aggregate results from the earlier 20-trial design study used for ablation. The table is included to clarify the design path from full recentering to support-weighted partial recentering. Lower is better for mean cover and outlier landmarks; higher is better for accuracy and top lifetime.}
\label{tab:ablation-aggregate-v12}
\small
\begin{tabular}{lcccc}
\toprule
Method & Accuracy & Mean cover & Outlier lmks & Top-1 life \\
\midrule
Random & 0.9167 & 0.1599 & 1.98 & 0.6083 \\
Maxmin & 0.9889 & 0.1472 & 9.20 & 0.6924 \\
Full recentered & 0.9861 & 0.1291 & 8.79 & 0.6772 \\
Support-weighted partial & 0.9917 & 0.1341 & 8.99 & 0.6868 \\
$\varepsilon$-net matched & 0.9861 & 0.1245 & 8.32 & 0.6884 \\
Dense-core + maxmin & 0.9167 & 0.0982 & 0.92 & 0.6453 \\
\bottomrule
\end{tabular}
\end{table}

\subsection{Full recentering}
The first substantial version compared random landmark selection, maxmin, and full local recentering. The outcome already justified the central geometric idea: recentering consistently improved the mean signal-cover radius and modestly reduced the number of outlier landmarks. In the earlier design study, the aggregate mean cover dropped from $0.1472$ for maxmin to $0.1291$ for full recentering, which is the strongest geometric gain among the depth-based variants in Table~\ref{tab:ablation-aggregate-v12}. However, the thresholded $H_1$-count accuracy did not improve with it; it moved slightly from $0.9889$ to $0.9861$. The relevant interpretation is not that full recentering fails, but that the full move toward the cell interior can sometimes over-correct when the loop signal is carried by boundary coverage rather than by central support.

\subsection{Aggressive support filtering}
The next experiment introduced a more aggressive support-filtered variant that created an oversampled provisional landmark set, removed weakly supported cells, and then recentered the survivors. This version improved the geometric metrics: it deleted many outlier landmarks and improved cover substantially under diffuse contamination. The dense-core + maxmin baseline in Table~\ref{tab:ablation-aggregate-v12} illustrates the same phenomenon in a simpler way: it reduces the mean cover to $0.0982$ and almost eliminates outlier landmarks, but the thresholded $H_1$-count accuracy falls to $0.9167$. For the present problem, this is not the desired tradeoff. On the two-circle family with uniform outliers, for example, the support-filtered method often deleted too many boundary-supporting landmarks and reduced the probability of recovering the correct number of threshold-surviving loops. This comparison showed that deleting weakly supported cells is not equivalent to preserving witness-complex topology.

\subsection{Local trimming and related variants}
The third stage replaced hard support filtering by milder local trimming and neighborhood-based corrections. This preserved more topology than the support-filtered variant, but the methods were still less clean than the basic maxmin-plus-depth construction. They were also harder to justify theoretically, because the key correction step no longer aligned as directly with a centerpoint or deepest-point representative. These experiments therefore suggested retaining the same local depth direction while reducing the amount of movement. In hindsight this was the pivotal design lesson: once the representative itself is depth-based, it is usually better to modulate \emph{how far} one moves than to complicate \emph{where} one moves.

\subsection{Support-weighted partial recentering}
Support-weighted partial recentering keeps the same depth-driven correction direction as full recentering, but scales the amount of movement by the support of the cell. Empirically, this retained most of the geometric gain while removing the small losses in thresholded $H_1$-count accuracy that motivated the refinement: in Table~\ref{tab:ablation-aggregate-v12}, support-weighted partial recentering still improves mean cover substantially relative to maxmin ($0.1472$ to $0.1341$) but raises the aggregate accuracy slightly above maxmin ($0.9917$ versus $0.9889$) on the earlier design batch. The ablation study therefore isolates a clear design principle: a local depth correction is useful, but it should be moderated in poorly supported cells.

\section{Experimental results}
The empirical section has three main planar components and one modest nonplanar extension: a 50-trial synthetic confirmation benchmark with additional baselines, a parameter study for the main method, a benchmark derived from MPEG-7 silhouettes, and a small three-dimensional torus case study. Together they address four natural questions: how the method compares with additional baselines, whether the practical parameter choice is stable, whether the same behavior persists beyond synthetic loops, and whether the geometric effect survives a controlled move to $\R^3$.

\subsection{Synthetic confirmation benchmark with stronger baselines}
Table~\ref{tab:synthetic-aggregate-v8} aggregates the 50-trial synthetic confirmation run. Two features are especially clear. First, support-weighted partial recentering preserves the thresholded $H_1$-count performance of maxmin while improving geometry. Second, the strongest external baseline is not dense-core denoising but the $\varepsilon$-net-style lazy-witness competitor.

\begin{table}[ht]
\centering
\caption{Aggregate synthetic confirmation results over all matched trials. Lower is better for mean cover, outlier landmarks, and trimmed bottleneck; higher is better for accuracy and top lifetime.}
\label{tab:synthetic-aggregate-v8}
\small
\begin{tabular}{lccccc}
\toprule
Method & Accuracy & Mean cover & Outlier lmks & Top-1 life & Trimmed bottleneck \\
\midrule
Maxmin & 0.9922 & 0.1477 & 9.22 & 0.6927 & 0.5606 \\
Support-weighted & 0.9922 & 0.1338 & 9.00 & 0.6921 & 0.5605 \\
$\varepsilon$-net matched & 0.9900 & 0.1253 & 8.36 & 0.6886 & 0.5605 \\
Dense-core + maxmin & 0.9389 & 0.0982 & 0.87 & 0.6469 & 0.5608 \\
\bottomrule
\end{tabular}
\end{table}

The paired comparison against maxmin is the central empirical test. For \emph{support-weighted partial recentering} versus maxmin, the mean signal-cover radius improves by
\[
-0.0138
\]
with a $95\%$ bootstrap confidence interval
\[
[-0.0146,\,-0.0131],
\]
and the paired Wilcoxon test is effectively zero at machine precision. The gain is also perfectly consistent at the setting level: support-weighted partial recentering improves cover in all $18/18$ synthetic dataset/noise/budget settings. By contrast, the thresholded $H_1$-count accuracy is tied with maxmin both in aggregate ($0.9922$ versus $0.9922$) and in the paired exact test (three discordant wins and three discordant losses, $p=1.0$). This is the intended empirical conclusion: a significant geometric gain without a detectable topological penalty.

The additional baselines sharpen the empirical positioning. The $\varepsilon$-net matched method is geometrically stronger than support-weighted partial recentering in this synthetic study (mean cover $0.1253$ versus $0.1338$) but slightly weaker topologically (accuracy $0.9900$ versus $0.9922$), with no significant paired advantage in accuracy either way. Dense-core + maxmin is more aggressive still: it drives the geometric statistics down sharply, but at the cost of a large topological drop to $0.9389$ accuracy. Thus the synthetic confirmation benchmark supports a tradeoff interpretation rather than a dominance claim.

\begin{figure}[ht]
\centering
\includegraphics[width=0.92\linewidth]{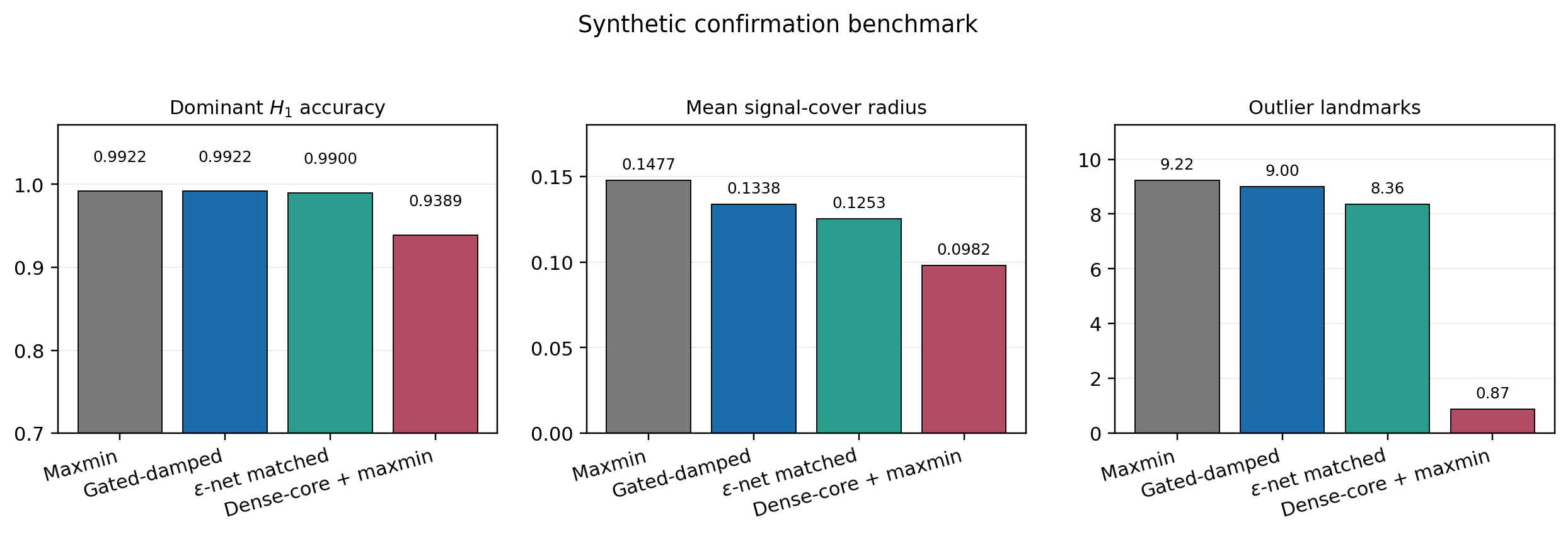}
\caption{Aggregate view of the synthetic confirmation benchmark. Support-weighted partial recentering improves geometry over maxmin while keeping thresholded $H_1$-count accuracy unchanged in aggregate. The $\varepsilon$-net baseline is geometrically stronger, while dense-core denoising is topologically too aggressive.}
\label{fig:synthetic-overview-v8}
\end{figure}

\subsection{Breakdown by signal family and noise model}
Aggregate tables can hide whether the observed gain is concentrated in one easy corner of the benchmark. Table~\ref{tab:synthetic-breakdown-v12} therefore resolves the synthetic confirmation study by signal family and contamination model, still averaging over landmark budgets. On the geometric side, the pattern is uniform across the reported settings: support-weighted partial recentering improves mean cover in every row, with relative gains ranging from $7.7\%$ on two circles with clustered outliers to $10.0\%$ on the figure-eight with uniform clutter. In four of the six aggregated settings, the thresholded $H_1$-count accuracy is exactly tied with maxmin. The remaining two cases go in opposite directions: a slight gain for the figure-eight under uniform clutter and a slight loss for two circles under uniform clutter. This is consistent with the ablation narrative. The most delicate regime is not an isolated loop, but a multi-loop configuration whose topology depends on preserving enough boundary support under diffuse contamination.

\begin{table}[ht]
\centering
\caption{Synthetic confirmation benchmark broken down by signal family and noise model, averaged over all three landmark budgets.}
\label{tab:synthetic-breakdown-v12}
\small
\begin{tabular}{llccccc}
\toprule
Dataset & Noise & Maxmin acc. & \shortstack{Support-\\weighted\\ acc.} & Maxmin cover & \shortstack{Support-\\weighted\\ cover} & Cover gain \\
\midrule
Circle & Cluster & 1.0000 & 1.0000 & 0.0827 & 0.0745 & 9.9\% \\
Circle & Uniform & 1.0000 & 1.0000 & 0.1566 & 0.1412 & 9.8\% \\
Figure-eight & Cluster & 1.0000 & 1.0000 & 0.1197 & 0.1088 & 9.1\% \\
Figure-eight & Uniform & 0.9733 & 0.9800 & 0.2003 & 0.1802 & 10.0\% \\
Two circles & Cluster & 1.0000 & 1.0000 & 0.1227 & 0.1133 & 7.7\% \\
Two circles & Uniform & 0.9800 & 0.9733 & 0.2040 & 0.1851 & 9.3\% \\
\bottomrule
\end{tabular}
\end{table}

\subsection{Dependence on landmark budget}
The budget dependence is equally important because witness-based landmark selection is most attractive precisely when $m$ is small. Table~\ref{tab:budget-breakdown-v12} shows that the cover gain persists at every tested budget on both benchmarks. On the synthetic benchmark, the relative improvement in mean cover stays between $9.0\%$ and $10.1\%$ across $m=20,30,40$. On MPEG-7, the corresponding gains range from $7.5\%$ to $9.2\%$. The accuracy differences remain small compared with those geometric shifts. This supports the view that the proposed correction is not narrowly tuned to a single budget, but behaves as a stable local post-processing step across the full low-budget regime tested here.

\begin{table}[ht]
\centering
\caption{Budget dependence of maxmin and support-weighted partial recentering on the synthetic and MPEG-7 benchmarks.}
\label{tab:budget-breakdown-v12}
\small
\begin{tabular}{llccccc}
\toprule
Benchmark & $m$ & Maxmin acc. & \shortstack{Support-\\weighted\\ acc.} & Maxmin cover & \shortstack{Support-\\weighted\\ cover} & Cover gain \\
\midrule
\multirow{3}{*}{Synthetic}
 & 20 & 0.9833 & 0.9800 & 0.2131 & 0.1933 & 9.3\% \\
 & 30 & 0.9967 & 1.0000 & 0.1339 & 0.1219 & 9.0\% \\
 & 40 & 0.9967 & 0.9967 & 0.0961 & 0.0863 & 10.1\% \\
\midrule
\multirow{3}{*}{MPEG-7}
 & 20 & 0.7483 & 0.7517 & 0.1011 & 0.0917 & 9.2\% \\
 & 30 & 0.8021 & 0.7985 & 0.0661 & 0.0608 & 8.0\% \\
 & 40 & 0.8178 & 0.8151 & 0.0476 & 0.0440 & 7.5\% \\
\bottomrule
\end{tabular}
\end{table}

\subsection{Parameter sensitivity of support-weighted partial recentering}
A main method that depends on a narrowly tuned parameter choice would be difficult to justify, so the next question is sensitivity. The support-weighted partial method has two visible parameters:
\[
\alpha_{\max}\in(0,1], \qquad \tau>0.
\]
The parameter sweep uses
\[
\alpha_{\max}\in\{0.3,0.5,0.6,0.8\}, \qquad \tau\in\{0.5,1.0,1.5\},
\]
on representative circle and two-circle settings under both clustered and uniform contamination.

The resulting picture is stable. Across all $12$ tested parameter pairs, the mean thresholded $H_1$-count accuracy ranges only from $0.98125$ to $0.98750$, while the mean cover ranges from $0.1499$ to $0.1622$. The main effect comes from $\alpha_{\max}$: too little movement under-corrects, while moderate movement improves cover without sacrificing topology. The parameter $\tau$ has a visibly smaller effect and mainly refines the support-adaptation threshold. The best tradeoff in this sweep is
\[
(\alpha_{\max},\tau)=(0.6,0.5),
\]
and the silhouette-based benchmark therefore uses this parameter pair.

\begin{table}[ht]
\centering
\caption{Top parameter settings for support-weighted partial recentering in the parameter sweep, ranked by accuracy and then by mean cover.}
\label{tab:parameter-top-v8}
\small
\begin{tabular}{ccccc}
\toprule
$\alpha_{\max}$ & $\tau$ & Accuracy & Mean cover & Outlier lmks \\
\midrule
0.6 & 0.5 & 0.9875 & 0.1499 & 8.11 \\
0.6 & 1.0 & 0.9875 & 0.1516 & 8.19 \\
0.6 & 1.5 & 0.9875 & 0.1542 & 8.26 \\
0.8 & 0.5 & 0.9813 & 0.1478 & 8.08 \\
\bottomrule
\end{tabular}
\end{table}

\begin{figure}[ht]
\centering
\includegraphics[width=0.82\linewidth]{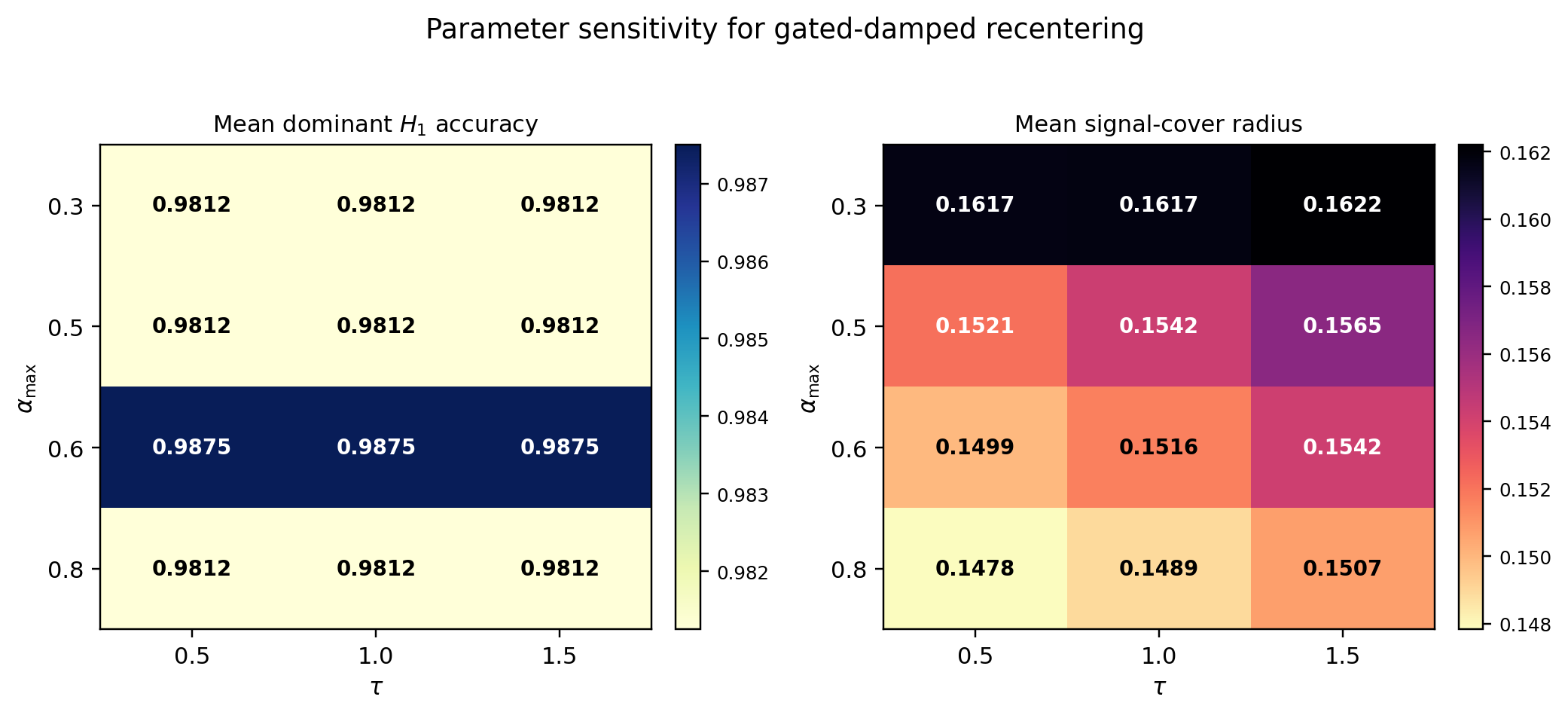}
\caption{Parameter sensitivity for support-weighted partial recentering. Over the tested parameter grid, the thresholded $H_1$-count accuracy is stable, while the best geometric performance occurs near $\alpha_{\max}=0.6$ and smaller $\tau$.}
\label{fig:parameter-heatmaps-v8}
\end{figure}

\subsection{MPEG-7 silhouette case study}
The final empirical question is whether the synthetic conclusions persist on a benchmark that is not built from prescribed planar signal families. The MPEG-7 case study uses a balanced subset of $120$ silhouettes, sampled from $12$ classes with $10$ shapes per class, together with clean, clustered-noise, and uniform-noise variants.

Table~\ref{tab:silhouette-based-v8} summarizes the aggregate results. The comparison against maxmin has the same qualitative form. Support-weighted partial recentering improves mean cover from $0.0716$ to $0.0655$, an $8.5\%$ reduction, while no statistically significant difference is detected in the aggregate thresholded $H_1$-count accuracy ($0.7894$ for maxmin versus $0.7884$ for support-weighted partial recentering). In the paired silhouette-based comparison, the mean cover difference is
\[
-0.00606
\]
with a $95\%$ bootstrap confidence interval
\[
[-0.00614,\,-0.00598],
\]
and the paired Wilcoxon signed-rank test returned a p-value below numerical resolution of the implementation. The paired exact test on the $H_1$ discordances shows no significant difference (544 wins versus 565 losses for support-weighted partial recentering, $p\approx 0.55$).

\begin{table}[ht]
\centering
\caption{Aggregate results on the MPEG-7 silhouette benchmark. Lower is better for mean cover, outlier landmarks, and trimmed bottleneck; higher is better for accuracy and top lifetime.}
\label{tab:silhouette-based-v8}
\small
\begin{tabular}{lccccc}
\toprule
Method & Accuracy & Mean cover & Outlier lmks & Top-1 life & Trimmed bottleneck \\
\midrule
Maxmin & 0.7894 & 0.0716 & 6.15 & 0.3421 & 0.2481 \\
Support-weighted & 0.7884 & 0.0655 & 5.87 & 0.3414 & 0.2469 \\
$\varepsilon$-net matched & 0.7888 & 0.0629 & 5.46 & 0.3413 & 0.2457 \\
Dense-core + maxmin & 0.7685 & 0.0536 & 0.35 & 0.3298 & 0.2512 \\
\bottomrule
\end{tabular}
\end{table}

The additional baselines play the same comparative roles as in the synthetic benchmark. The $\varepsilon$-net matched method is geometrically stronger than support-weighted partial recentering, whereas the paired exact accuracy test between them detects no significant difference in thresholded $H_1$-count accuracy. Dense-core + maxmin remains overly reductive: it improves geometry but degrades the thresholded $H_1$-count accuracy. Thus the silhouette-based study supports the same qualitative conclusion as the synthetic benchmark.

\begin{figure}[ht]
\centering
\includegraphics[width=0.92\linewidth]{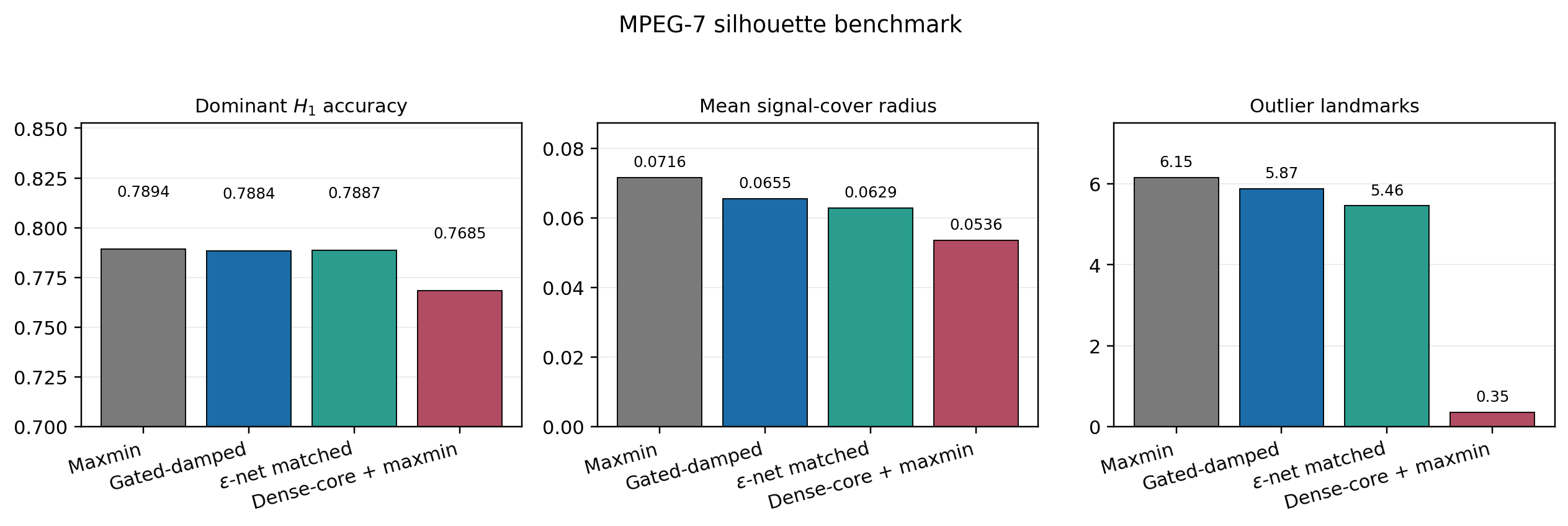}
\caption{Aggregate view of the MPEG-7 silhouette-based case study. Support-weighted partial recentering improves mean cover relative to maxmin, while the thresholded $H_1$-count accuracy remains statistically indistinguishable at the aggregate level.}
\label{fig:silhouette-based-overview-v8}
\end{figure}

\subsection{Noise-wise breakdown on MPEG-7}
The aggregate MPEG-7 table can be sharpened further by resolving the benchmark by contamination model. Table~\ref{tab:silhouette-based-noise-v12} shows three stable patterns. First, clean and clustered silhouettes behave similarly, whereas uniform clutter is the hardest regime for all methods. Second, support-weighted partial recentering improves mean cover over maxmin in all three noise classes, by about $6.9\%$ on clean and clustered silhouettes and by $10.3\%$ under uniform clutter. Third, the $\varepsilon$-net matched baseline remains the strongest geometric comparator, especially under clutter, but the thresholded $H_1$-count accuracies of maxmin, support-weighted partial recentering, and the $\varepsilon$-net method remain close enough that the principal distinction in the reported experiments still lies on the geometric side.

\begin{table}[ht]
\centering
\caption{Noise-wise breakdown of the silhouette-based MPEG-7 benchmark. Accuracies are thresholded $H_1$-count accuracies; cover values are mean signal-cover radii.}
\label{tab:silhouette-based-noise-v12}
\small
\begin{tabular}{lcccccc}
\toprule
Noise & Maxmin acc. & \shortstack{Support-\\weighted\\ acc.} & $\varepsilon$-net acc. & Maxmin cover & \shortstack{Support-\\weighted\\ cover} & $\varepsilon$-net cover \\
\midrule
Clean & 0.8240 & 0.8225 & 0.8204 & 0.0536 & 0.0499 & 0.0488 \\
Cluster & 0.8176 & 0.8118 & 0.8157 & 0.0629 & 0.0585 & 0.0542 \\
Uniform & 0.7265 & 0.7310 & 0.7301 & 0.0983 & 0.0882 & 0.0857 \\
\bottomrule
\end{tabular}
\end{table}

\subsection{Runtime note}
The timing information should be interpreted carefully. The per-method timing columns recorded by the earlier scripts are not precise enough to support a final scaling claim, so the paper should not overstate them. The trustworthy runtime number at present is the wall-clock time of the full silhouette-based benchmark itself. On a 16-core desktop, the balanced MPEG-7 run
\[
120 \text{ images} \times 3 \text{ noise models} \times 3 \text{ budgets} \times 20 \text{ trials}
\]
completed in
\[
177 \text{ minutes } 12 \text{ seconds}
\]
using process-level parallelism. This shows that the overall benchmark is computationally feasible, although a cleaner per-method scaling study would still be useful.

\subsection{Interpretation of the experimental evidence}
Taken together, the main planar experiments show that support-weighted partial recentering yields a statistically significant and consistent geometric improvement over maxmin on the principal planar benchmarks, while no statistically significant degradation in thresholded $H_1$-count accuracy is detected at the level of summary used in this study. This is the paper's main empirical claim, and it is supported jointly by the synthetic confirmation run, the stronger baselines, the parameter study, and the benchmark derived from MPEG-7 silhouettes. A separate three-dimensional torus auxiliary experiment is reported in \Cref{sec:scope-beyond,sec:3d-extension}; it is intentionally kept outside the main planar validation claim and is interpreted more cautiously. At the same time, the paper does \emph{not} show that support-weighted partial recentering is the empirically best method in every metric. The $\varepsilon$-net-style witness baseline is typically stronger geometrically. What the evidence supports instead is a narrower conclusion: the depth-based method family improves geometry over maxmin, avoids the loss in thresholded $H_1$-count accuracy seen for dense-core denoising, and remains competitive with a stronger witness-specific baseline in the settings for which the paper is designed.

\section{Scope beyond the planar setting}\label{sec:scope-beyond}
The paper is organized around planar validation, but the method family itself is not intrinsically two-dimensional. The local convex-core lemma is already stated in $\R^d$, and the continuous cover bounds make no planar assumption. What is genuinely planar in the implementation used here is the exact computation of deepest data points inside each cell and the decision to validate primarily on transparent $H_1$ signal families. The present section therefore gathers the discussion beyond the planar setting in one place: first what extends at the level of formulation, then an auxiliary three-dimensional torus experiment, and finally why the main empirical validation still remains planar.

\subsection{What extends to fixed dimension}
At the level of definition, nothing prevents one from running the same pipeline in $\R^d$:
\begin{enumerate}[leftmargin=1.5em]
    \item compute maxmin seeds,
    \item form nearest-seed cells,
    \item choose a deep representative of each cell,
    \item apply full recentering, fixed-step partial recentering, or support-weighted partial recentering,
    \item build the lazy witness filtration on the resulting landmarks.
\end{enumerate}
The geometric statements of \Cref{lem:convex-core-general,prop:two-r-cover,prop:continuous-damped,prop:snapped-damped,cor:snapped-universal} continue to hold verbatim. What changes is only the algorithmic cost of computing local depth representatives. In fixed dimension, approximate depth representatives of centerpoint type are available algorithmically \cite{MulzerWerner2013}, so the paper's local-robustness framework is not confined to the plane.

\subsection{A modest three-dimensional torus extension}\label{sec:3d-extension}
To broaden scope without changing the main empirical emphasis of the paper, we also ran a small three-dimensional torus case study. This extension was designed as an auxiliary experiment rather than as a second main benchmark. The signal cloud consists of noisy samples from a torus surface in $\R^3$, so the target signature is $(\beta_1,\beta_2)=(2,1)$. The comparison is deliberately narrow: maxmin versus support-weighted partial recentering only.

Two implementation details distinguish this extension from the planar experiments. First, the cell representative is chosen by an approximate directional depth score rather than by the exact planar deepest-point computation used earlier. Second, the topological comparison is made over a \emph{calibrated multiscale radius band} rather than at one fixed witness radius. A pilot study on clean and mildly perturbed data selected the parameters $\alpha_{\max}=0.55$ and $\tau=1.0$ for the support-weighted rule, together with the lifetime thresholds used in the torus summary, and the common radius band
\[
\{0.52,0.56,0.60\}.
\]
An evaluation trial counts as a torus hit if some radius in that fixed band recovers the target signature under the chosen lifetime thresholds.

\begin{table}[t]
\centering
\caption{Three-dimensional torus extension evaluated over the selected multiscale radius band $\{0.52,0.56,0.60\}$. Each row averages $144$ trials. ``Torus hit'' means that some radius in the band recovers the target signature $(\beta_1,\beta_2)=(2,1)$.}
\label{tab:3d-extension}
\small
\begin{tabular}{llcccc}
\toprule
Noise & Method & Torus hit & $H_1$ hit & $H_2$ hit & Mean cover \\
\midrule
Clean & Maxmin & 0.306 & 0.306 & 1.000 & 0.1241 \\
Clean & Support-weighted & 0.257 & 0.257 & 1.000 & 0.1209 \\
Mild cluster & Maxmin & 0.243 & 0.264 & 0.972 & 0.1251 \\
Mild cluster & Support-weighted & 0.236 & 0.236 & 0.972 & 0.1216 \\
Moderate cluster & Maxmin & 0.229 & 0.243 & 0.958 & 0.1253 \\
Moderate cluster & Support-weighted & 0.257 & 0.278 & 0.944 & 0.1221 \\
\bottomrule
\end{tabular}
\end{table}

\Cref{tab:3d-extension} shows the qualitative pattern. Geometrically, support-weighted partial recentering again lowers the mean signal-cover radius in every tested noise regime, so the same local correction mechanism continues to have a stable geometric effect beyond the plane. The gain is modest but consistent, in line with the planar observations. Topologically, however, the picture is mixed rather than uniformly favorable. In the moderately clustered-noise regime, support-weighted partial recentering slightly improves both the torus-hit rate and the $H_1$ hit rate, whereas in the clean and mildly perturbed regimes it is slightly worse than maxmin on those same summaries. The $H_2$ hit rate remains high for both methods throughout. Accordingly, this extension is best read as methodological evidence that the geometric correction persists in $\R^3$, rather than as evidence of a uniform three-dimensional topological improvement. That limitation is why the case study is kept short and why the main empirical validation of the paper remains planar.

\subsection{Why the main validation still remains planar}
For the present paper, the planar focus is still appropriate. There are three reasons. First, the relevant depth computations are easiest to validate in the plane. Second, the synthetic signals used here have a transparent $H_1$ interpretation, which makes the landmark tradeoff easy to diagnose. Third, the paper is about a witness-landmark mechanism rather than about higher-dimensional benchmark design. The three-dimensional torus extension in \Cref{sec:3d-extension} is included because it broadens scope without changing this balance. It shows that the geometric correction survives in $\R^3$, but it also exposes the current bottleneck: low-budget three-dimensional witness filtrations can still yield mixed $H_1$ behavior even when the dominant $H_2$ feature is stable. That mixed outcome confirms that a full nonplanar benchmark would be a separate project rather than a minor extension of the present one.

\section{Discussion and limitations}
The proposed methods combine a geometrically simple idea with a correspondingly modest theory. Local halfspace depth supplies a robust correction direction inside each maxmin cell, and the theory explains why deep representatives remain inside closed convex inlier cores and why the corrected landmarks inherit coarse cover guarantees from maxmin. The ablation study shows that this correction performs most favorably when moderated by cell support, which leads to support-weighted partial recentering as the preferred empirical variant.

Four points deserve emphasis. First, the empirical advantage is primarily geometric. Across both the synthetic and the silhouette-based planar benchmarks, the most stable effect is a reduction in mean signal-cover radius. The effect is not confined to one special case; it appears across signal families, contamination models, and all tested budgets. Second, the thresholded $H_1$-count behavior is best interpreted as \emph{preservation} rather than improvement. Support-weighted partial recentering does not create a uniformly stronger barcode than maxmin. What it does is improve geometry without introducing a detectable penalty in the thresholded $H_1$-count summary used here. Third, the comparison with the $\varepsilon$-net matched baseline clarifies the most appropriate interpretation of the method. The depth-based correction is natural when one wants a simple local modification of maxmin with an interpretable geometric mechanism. It is not meant to replace every witness-specific covering strategy. Fourth, the three-dimensional torus extension should be read as evidence about scope, not as equal-weight confirmation of the planar claims.

The scope of the paper is correspondingly limited. The results do not yield a global witness-approximation theorem for Vietoris--Rips persistence, and they do not show that support-weighted partial recentering dominates every modern landmark-selection baseline. In particular, the $\varepsilon$-net matched baseline is typically stronger geometrically in the present experiments. This is not a contradiction of the paper's goal. The goal is to understand what a local depth correction can do once maxmin has already determined the coarse partition. From that perspective, the main conclusion is that the correction is useful but should be interpreted conservatively.

It is also useful to say explicitly when the method is most and least appropriate. It is most appropriate when the cells have an identifiable inlier core and when maxmin has already supplied a reasonable coarse cover but may have placed some representatives too close to clutter or sparsely supported boundary points. It is least appropriate when the geometry of the relevant feature is itself highly boundary-driven and the cell support is too weak to distinguish signal from clutter. The small losses seen for full recentering in some multi-loop settings fit exactly this second regime. This is another reason the support-adaptation rule is part of the definition of the preferred variant.

The modest three-dimensional torus extension sharpens the same point. Its main advantage is that the geometric effect persists: support-weighted partial recentering again improves average cover in every tested regime, so the local correction is not restricted to the planar setting. Its main disadvantage is that the present three-dimensional witness readout remains fragile. In clean and mildly perturbed settings the topological summaries are slightly better for maxmin, while in the moderately clustered-noise setting the support-weighted rule is slightly better. The extension therefore supports \emph{geometric extensibility} but not a uniform three-dimensional topological advantage. This is an informative outcome: it broadens the scope of the paper without supporting claims of a uniform nonplanar advantage.

Several natural extensions remain. A PH-landmark comparator would further broaden the empirical section. A dedicated runtime and scaling study would give a clearer computational account than the present wall-clock benchmark. A stronger nonplanar benchmark would require either a different witness-side readout or a substantially broader study than the modest torus extension included here. On the mathematical side, one could also ask for witness-side approximation theorems under hypotheses that combine local depth control with cell-wise covering assumptions. Each of these directions would extend the paper, but none is required for the mathematical and empirical claims established here.

\section{Conclusion}
This paper introduces and studies local depth-based corrections to maxmin landmark selection for lazy witness persistence. Its main contributions are an explicit family of local correction rules, local geometric guarantees for that family, and a comparative empirical study showing that support-weighted partial recentering is the most reliable variant among those tested.

On the principal planar benchmarks, support-weighted partial recentering consistently improves geometric coverage over maxmin without detectable degradation in the thresholded $H_1$ summary used in the study. Relative to stronger baselines, the method is best interpreted as competitive rather than dominant: the $\varepsilon$-net matched baseline is often geometrically stronger, whereas dense-core denoising is more disruptive with respect to the same topological summary. The auxiliary three-dimensional torus experiment indicates that the geometric effect persists beyond the plane, but its topological benefit remains mixed. The method should therefore be viewed as a conservative local alternative to maxmin with a clear geometric rationale and a repeatable planar advantage, not as a replacement for stronger witness-specific constructions or as a global approximation theory.

\section*{Data availability}
The result archives and supporting scripts for the experiments reported in this paper are publicly available at \url{https://doi.org/10.5281/zenodo.19676292}. The deposited record contains the planar synthetic benchmarks, the confirmation and parameter-sensitivity studies, the silhouette-based benchmark, and the auxiliary three-dimensional torus extension.

\section*{Acknowledgements}
This work was supported by GA\v{C}R grant 25-16847S and by the REFRESH project (\nolinkurl{CZ.10.03.01/00/22_003/0000048}).

\bibliographystyle{unsrt}
\bibliography{tverberg_witness_refs_v25}

\end{document}